\theoremstyle{thmstyleone}%
\newtheorem{theorem}{Theorem}
\theoremstyle{thmstyletwo}%
\theoremstyle{thmstylethree}%
\newtcolorbox[auto counter]{mybox}[1][]{
    title=Box~\thetcbcounter,
    #1 
}
\begin{document}

\title[Measurement noise scaling laws for cellular representation learning]{
\linespread{0.95}\selectfont
  \Large\bfseries
A measurement noise scaling law for cellular representation learning
}

\author*[1, 2, 3]{\fnm{Gokul} \sur{Gowri}}\email{gokulg@mit.edu}
\author[4]{\fnm{Igor} \sur{Sadalski}}
\author[4]{\fnm{Dan} \sur{Raviv}}
\author[1, 2]{\fnm{Peng} \sur{Yin}}
\author[4]{\fnm{Jonathan} \sur{Rosenfeld}}
\author*[1]{\fnm{Allon} \sur{Klein}}\email{allon\_klein@hms.harvard.edu}

\affil[1]{\orgdiv{Department of Systems Biology}, \orgname{Harvard Medical School}, \orgaddress{\city{Boston}, \state{MA}, \country{USA}}}
\affil[2]{\orgdiv{Wyss Institute for Biologically Inspired Engineering}, \orgname{Harvard Medical School}, \orgaddress{\city{Boston}, \state{MA}, \country{USA}}}
\affil[3]{\orgdiv{Laboratory for Information and Decision Systems}, \orgname{Massachusetts Institute of Technology}, \orgaddress{\city{Cambridge}, \state{MA}, \country{USA}}}
\affil[4]{\orgname{Cellular Intelligence}, \orgaddress{\city{Boston}, \state{MA}, \country{USA}}}

\abstract{Large genomic and imaging datasets can be used to train models that learn meaningful representations of cellular systems. Across domains, model performance improves predictably with dataset size and compute budget, providing a basis for allocating data and computation. Scientific data, however, is also limited by noise arising from factors such as molecular undersampling, sequencing errors, and image resolution. By fitting 1,670 representation learning models across three data modalities (gene expression, sequence, and image data), we show that noise defines a distinct axis along which performance improves. Noise scaling follows a logarithmic law. We derive the law from a model of noise propagation, and use it to define noise sensitivity and model capacity as benchmarking metrics. We show that protein sequence representations are noise-robust while single cell transcriptomics models are not, with a Transformer-based model showing greater noise robustness but lower saturating performance than a variational autoencoder model. Noise scaling metrics may support future model evaluation and experimental design.}

\keywords{representation learning, scaling laws, foundation models, single-cell transcriptomics}

\maketitle


\section*{Introduction}\label{sec:intro}

Cellular profiles obtained by single-cell RNA sequencing (scRNA-seq) and high-content imaging now span diverse tissues, developmental stages, disease states, and experimental perturbations \cite{Yao2023-wi, Zhang2025-qv}. These large datasets (collectively $> 10^8$ samples) create opportunities to identify shared cellular states across experimental contexts and predict responses to novel perturbations \cite{Bunne2024-hw, Wang2025-iv}.  To realize these opportunities, representation learning models are used to capture biologically meaningful variation, while filtering out technical nuisance factors \cite{Gunawan2023-gf}. Several deep learning approaches underlie such models to date, including transformer-based architectures, autoencoder-based architectures, and contrastive losses \cite{Theodoris2023-iw, Cui2024-de, Heimberg2024-mc, Richter2024-zs}.

In domains outside of biology including natural language processing, image processing and chemical informatics, large model development has been guided by the study of model scalability. Choices in architecture, data acquisition, and training strategies are guided by deep learning scaling laws, which are empirical relationships that describe how model performance improves with increases in key resources like data, compute, and model parameters \cite{Hestness2017-sq, Rosenfeld2019-wc, Kaplan2020-ug, Hoffmann2022-wc, Bahri2024-dr, Chen2023-oh}.

In biology and other scientific domains, model performance can also be limited by noise in the data used for model training. A few specific data modalities, such as DNA sequence, exist in large repositories with reasonably low error rates ($<10^{-2}$ errors/nucleotide, \cite{Stoler2021-eo}) but the majority of biological data modalities are more prone to measurement noise. scRNA-seq and spatially-resolved transcriptomics, for example, are methods fundamentally limited by the low numbers of mRNA molecules per gene in each cell. Though measurement sensitivity has increased substantially with ongoing development of these methods \cite{Hagemann-Jensen2020-bd}, for many existing technologies the probability of detecting a given mRNA molecule is well below 50\%, and in some cases the detection rate is further decreased by insufficient sequencing depth \cite{Svensson2017-vw}. As a result, measured transcript counts are subject to undersampling noise. Fluorescence microscopy is also prone to noise of different types including shot noise, scattering, out-of-focus light, autofluorescence, and limits in optical and camera resolution \cite{Lichtman2005-da}. 

In contrast to the scaling of model performance with data set size and model size, much less is known about the role of measurement noise on the ability of a model to learn meaningful representations. In textual representation by large language models (LLMs), errors in training data lead to degraded performance, predicted to persist even in the limit of infinite data \cite{Bansal2022-ig}. However, textual data used in LLM training are much less noisy than biological data. As representation models are being developed for diverse biological tasks, understanding how noise alters the learning rate of models could support the model evaluation, and allow rational planning of data-generation campaigns. 

Here, we show evidence for a general and quantitative scaling relationship between measurement noise and model performance across representation learning models of scRNA-seq, spatial transcriptomics, protein sequences, and imaging data. We show that the noise-scaling law can be derived by analogy to additive Gaussian noise channels, and that the resulting analytical form of the law can be used to define model benchmarks and identify noise-sensitivity and saturation that may guide experimental design. We critically evaluate several models.
\section*{Results}\label{sec:results}

\subsection*{A metric for representation-learning model performance} \label{sec:framework}

In deep learning scaling analyses, it is typical to evaluate the quality of models directly by evaluating their loss on held-out validation data \cite{Hestness2017-sq, Rosenfeld2019-wc, Kaplan2020-ug, Bahri2024-dr}. 
However, model loss is not comparable between models with different loss functions, or even for a single model applied to data with different statistical properties \cite{Brandfonbrener2024-om} such as different noise properties.
Therefore, to study the effect of noise on representation learning model performance, we introduced an alternative approach to measuring representation quality, by estimating the mutual information (MI) between the representations learnt by a model and some information about each sample that remains hidden until after learning is completed (\textbf{Fig.~\ref{fig:schematic}}). Formally, this approach is a generalization of linear probing \cite{Belinkov2022-sw, Pimentel2020-po}, which can be viewed as estimating MI between a representation and a classification label.  Our generalization uses a neural network-based estimator of MI that accommodates high-dimensional and continuous auxiliary signals \cite{Gowri2024-wz}. This approach provides a performance metric that is comparable between model types and noise levels in a given data set. 

We first evaluated representation model performance for four single-cell transcriptomic data sets, each of which provides an additional auxiliary signal as follows: 
\begin{enumerate}
    \item \textit{Developmental time} of $\sim10^7$ cells profiled by scRNA-seq across mouse development, where developmental time is quantified by embryonic stage \cite{Qiu2024-cq}.
    \item \textit{Surface protein abundances} of $\sim 10^5$ peripheral mononuclear blood cells (PBMCs) measured by an antibody panel through CITE-seq \cite{Hao2021-yu}.
    \item \textit{Transcriptional profile of a clonally related cell} in $\sim 10^5$ mouse hematopoietic stem cells measured using lineage-traced scRNA-seq \cite{Weinreb2020-uk}. 
    \item \textit{Transcriptional profile of a spatially adjacent cell} in a coronal mouse brain section of $\sim 10^5$ cells measured using MERFISH \cite{Vizgen2021-fo}.
\end{enumerate}

For each of these, we evaluated two linear baseline models: random projection and dimensionality reduction by principal component analysis (PCA), and two modern generative models with distinct architectures: single cell variational inference (scVI) \cite{Lopez2018-zz} and Geneformer \cite{Theodoris2023-iw}. scVI is a variational autoencoder designed to compress high-dimensional gene expression information into a low-dimensional latent space, with an explicit treatment of gene expression noise. Geneformer instead uses a Transformer-based language model that maps gene expression vectors to sequences by ordering gene-specific tokens based on expression level. Our implementations of scVI and Geneformer have between 1.3-116.8 million and 1.9-13 million parameters respectively (depending on gene number in each dataset), although we note that parameter counts across different model families are not directly comparable. We trained Geneformer across multiple GPUs using DeepSpeed \cite{Rasley2020-iq}. Model implementation and data preprocessing details  can be found in the {\bf Supplemental Text} (sections~\ref{app:details_data}, \ref{app:details_model}).

In all cases, to facilitate consistent comparisons, we learned representations on one data subset, and then evaluated performance in a separate, fixed held-out subset. This approach ensures that observed differences in mutual information are attributable to variations in the representations themselves, rather than estimation artifacts.

\begin{figure}
    \centering
    \includegraphics[width=\linewidth]{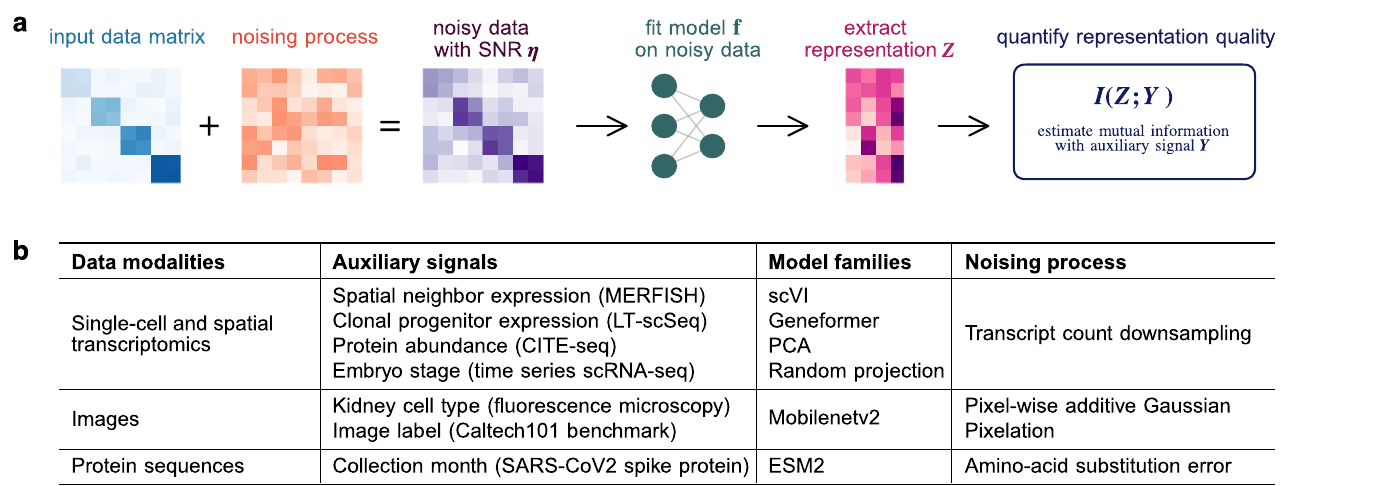} \\ \ \\
        \caption{ \textbf{Workflow and data sets for studying noise in biological representation learning.}
    \textbf{(a)} The workflow used to evaluate representation model quality as a function of noise, with model performance quantified by auxiliary mutual information (see text). The process is repeated across many noise levels (signal-to-noise ratios $\eta$) to generate empirical noise-scaling curves.
    \textbf{(b)} Summary of the auxiliary signals, model families and noising processes used for each of the three data modalities in this study.
    }
    \label{fig:schematic}
\end{figure}
\subsection*{Cell number scaling for cellular representations}\label{sec:cellnumber}
As a baseline for understanding the impact of noise on model learning, we first tested whether auxiliary-MI performance, $I$, shows expected scaling behavior with the number of samples $N$ (here, the number of single cells) used in training. In image and language models, performance scales as a power of sample number \cite{Bahri2024-dr}. We indeed found that $I$ is well-described by a saturating power-law across all neural network-based models and auxiliary tasks, $ I(N) = I_{\infty} - ({N}/{N_\textrm{sat}})^{-s} $,  where the parameters $I_\infty, N_{\textrm{sat}}, s$ characterize how each model learns from new data (fit coefficient of determination $R^2=0.90\pm 0.02$ across $n=80$ models comprising 8 scaling curves). The fits are shown collectively across models and datasets in {\bf Figs.~\ref{fig:bigfig}a, b}, with parameter values and model comparisons in {\bf Supplemental Fig.~\ref{fig:cellnumberparams}}. The differences in scaling between models is in agreement with prior work ({\bf Supplemental Text~\ref{app:cellnumberdetails}} and \cite{DenAdel2024-ue}), and establishes auxiliary-MI as suitable for studying the impact of noise on model performance.

\afterpage{%
  \clearpage
  \newgeometry{top=0.3in, bottom=0.3in, left=1in, right=1in}
\begin{figure}[p]
    \centering
    \captionsetup{labelformat=empty}
    \centering
  \begin{adjustbox}{width=1\textwidth,center}
    \includegraphics[width=\textwidth]{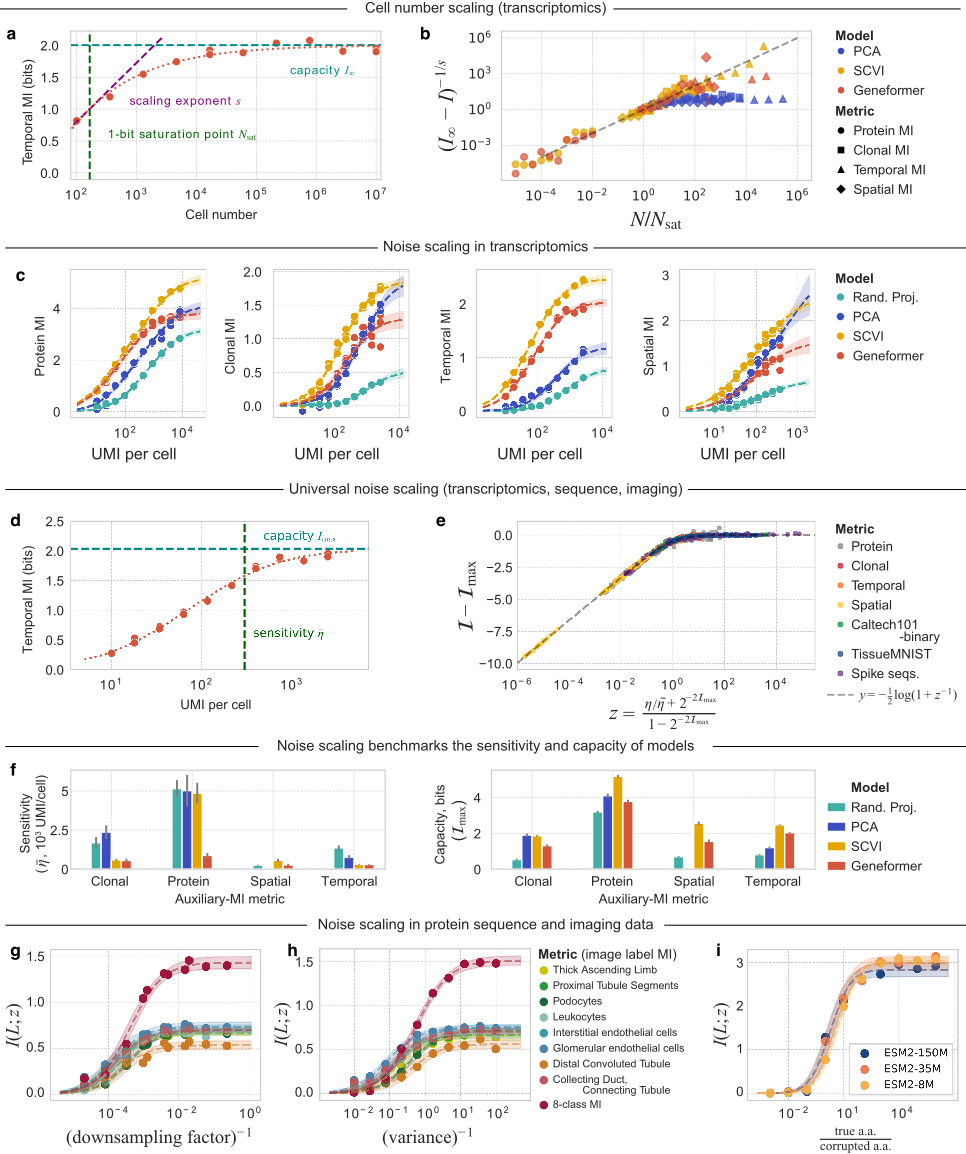}
  \end{adjustbox}
    \caption{\small \textbf{Fig. 2. Noise scaling in representation learning.} 
    \textbf{(a,b)} Representation model performance shows saturating power-law scaling with cell number. In (a), one example is shown of the Geneformer model applied to a 12 million cell single-cell transcriptomic data set \cite{Qiu2024-cq}. In (b), the data is replotted with rescaled axes and extended to three model families and four datasets (see Fig.~1b), to show common scaling behavior. $n=120$ models, across 3 tasks and 4 metrics.
    \textbf{(c)} Representation model performance as a function of noise, by progressive downsampling of transcripts detected per cell. Panels show different datasets. UMI=Unique Molecular Identifier. Dashed curves show fits to Eq.~(1), with $2\sigma$ confidence intervals shown by the shaded regions.
    \textbf{(d)} Annotation of model capacity and noise-robustness parameters $I_{\max},\ \bar{\eta}$ on a typical noise scaling plot.
    \textbf{(e)} Scaling plot showing collapse of 193 different noise scaling curves derived from measurements of 1,670 fit models spanning transcriptomic data, image data, and sequence data. Theory curves fit with $R^2= 0.979 \pm 0.003$ (mean$\pm$SD).
    \textbf{(f)} Benchmarking models by noise-sensitivity and saturating performance. Parameters for PCA on the spatial metric are unconstrained and are omitted from the plot.
    \textbf{(g-i)} The underlying representation model performance plots from (e), for kidney cortex cell microscopy image MobileNet representations after progressive noising by downsampling (g) and additive noise (h); and for COVID19 protein spike sequence ESM2 representations with progressive amino acid substitutions (i). Confidence bands show $2\sigma$ interval. $L$ denotes image/sequence label, and $z$ denotes representation.
    }
    \label{fig:bigfig}
\end{figure}
    \restoregeometry
    \clearpage
}

\subsection*{Noise scaling for cellular representations}

Although deep learning models are often thought to be strong denoisers \cite{Bengio2012-zx}, the degree to which cellular representation learning models are robust to noise in their training data is unknown. A noise-robust model would exhibit a regime in which the informativity of learned representations remains stable despite increasing noise levels. We evaluated the extent to which models are noise robust by simulating increasing measurement noise through downsampling observed transcript counts, and subsequently evaluating the quality of the learned representations using auxiliary-MI performance.

The dependence of model performance $I$ on the degree of downsampling noise is shown in {\bf Fig. \ref{fig:bigfig}c}. As expected, reducing the depth per cell degrades the performance of all models, across all datasets. Of note, no model or dataset exhibits large regimes of noise robustness. Instead, many of the measured performance curves are sigmoidal, indicating only limited robustness at the transcript levels present in the original datasets before performance steadily deteriorates ({\bf Fig. \ref{fig:bigfig}c}). A subset of the curves (e.g., PCA representations for spatial information) show a `hockey-stick' shape, indicating negligible robustness to downsampling noise, even at full transcript levels.

The smooth, sigmoidal performance curves as a function of downsampling noise suggest that a simple analytical relationship might capture how measurement noise constrains biological representation learning. Such a relationship would be valuable for experimental design, enabling principled allocation of sequencing depth and cell numbers in the same way that neural scaling laws guide resource decisions in large-scale machine learning. By inspection, it is evidently not a power-law, unlike previous neural scaling laws \cite{Bahri2024-dr}.

To define the noise–performance relationship, we turned to a simple model of information loss in noisy communication channels (see {\bf Box~\ref{box:theory}}). By extending established information-theoretic results \cite{Guo2005-bm}, we derived the following relationship between the signal-to-noise ratio of a measurement, $\eta=\, $SNR, and the mutual information preserved about an underlying external variable (see \textbf{Box~\ref{box:theory}}, \textbf{Supplemental Text \ref{app:proof}}):

\vskip 0.1in
\begin{equation}\label{eqn:scaling_general}
    \mathcal{I}(\eta) = \mathcal{I}_{\max} - \frac{1}{2} \log \frac{\eta/\bar{\eta} + 1}{\eta/\bar{\eta} + 2^{-2\mathcal{I}_{\max}}},
\end{equation}
\vskip 0.1in
where $\mathcal{I}_{\max}$ is the maximal information that can be extracted from a noiseless measurement at a fixed sample size, and $\bar{\eta}$ serves as a measure of noise robustness (specifically, the signal-to-noise ratio at which a model can gain at most $1/2$ a bit of information by increasing measurement sensitivity). These parameters are annotated on an empirical curve for Geneformer and temporal information in the mouse embryogenesis dataset in \textbf{Fig.~\ref{fig:bigfig}d}.
To connect this general relationship to cellular measurements, we note that the signal-to-noise ratio introduced by molecular undersampling follows Poisson statistics (see \textbf{Supplemental Text~\ref{app:umi-to-snr}}), $\eta = \mathrm{CV}^{-2} \propto \mathrm{UMI}$ \cite{Klein2015-jz} (CV=Coefficient of Variation), and $\bar{\eta}$ then takes on units of UMIs per cell.

In \textbf{Fig.~\ref{fig:bigfig}c}, the theoretical curves defined by Eq.~(\ref{eqn:scaling_general}) (dotted lines) closely match the empirical performance curves (scatter points) across models and datasets. The noise-scaling relationship holds across model architectures and across single-cell datasets spanning nearly five orders of magnitude in sample size. To illustrate generality of the relationship, when the 160 empirical measured curves are rescaled by their fitted $\mathcal{I}_{\max}$ and $\bar{\eta}$ values,  all models from all four model families collapse onto a single universal relationship (\textbf{Fig.~\ref{fig:bigfig}e}). 

The fitted noise-scaling parameters from Eq.~\ref{eqn:scaling_general} provide a compact summary of the noise-robustness of a model. In particular, $\bar{\eta}$ reflects each model’s effective noise tolerance, while $\mathcal{I}_{\max}$ captures its asymptotic capacity in the absence of measurement noise. For a given auxiliary task, models that combine low $\bar{\eta}$ with high $\mathcal{I}_{\max}$ are therefore preferred.

In \textbf{Fig.~\ref{fig:bigfig}f}, we compare inferred $\bar{\eta}$ and $\mathcal{I}_{\max}$ values across models. Geneformer consistently shows the greatest robustness to noise: across all tasks, it approaches within 0.5 bits of its asymptotic performance at fewer than $1{,}000$~UMI per cell. scVI displays similarly low noise sensitivity for three of the four tasks, but in the protein-abundance task it becomes noise-sensitized at $\sim 4{,}000$~UMI per cell. PCA, by contrast, shows far greater sensitivity to noise, with $\bar{\eta}$ values 2.5–12.8-fold larger than those of Geneformer, consistent with the limited denoising capacity of linear methods.

Despite its robustness to noise, Geneformer is not a strong model in terms of its capacity. Across all tasks, its capacity, $\mathcal{I}_{\max}$, is lower than those of scVI by 0.4–1.4~bits -- corresponding to approximately halving the complexity of the captured signal. This difference in performance is not only in its asymptotic capacity, but also at the noise level present in the datasets ({\bf Fig. \ref{fig:bigfig}c}).  Thus, representations learnt by scVI ultimately capture more information in the limit of low noise. It is possible that other models may simultaneously show noise robustness and higher information capacity.

\begin{mybox}[colback=blue!5!white,colframe=blue!30!black,
title=\textbf{Box 1:} \sf A model of noise scaling in representation learning,label=box:theory]
\begin{spacing}{1}
{\sf 
The empirical noise–performance curves in {Fig.~\ref{fig:bigfig}d} suggest that a simple theoretical relationship underlies how measurement noise limits the information extractable by representation models.  A classical setting in which such limits are analytically tractable is a Gaussian noise channel, where both the signal and the noise are modeled as Gaussian random variables. Although simplified, this framework captures the essential effect of diminishing returns: as measurement quality improves, each additional increment in signal-to-noise ratio (SNR) conveys progressively less new information.  We use it to derive the scaling form in Eq.~(\ref{eqn:scaling_general}).


Let $X, Y$ be multivariate Gaussian random vectors representing the system state and an auxiliary signal,  and let $Z$ be a noisy measurement of $X$ with SNR~$\eta$:
\[
Y\sim\mathcal{N}(\mu_Y,\Sigma_Y), \qquad
X = Y + \mathcal{N}(0,\Sigma_U), \qquad
Z = \sqrt{\eta} X + \mathcal{N}(0,I_n).
\]

Here, $\eta$ corresponds to the power-ratio SNR when signals are normalized such that $\mathbb{E}[X^2]=1$ (following the convention of \citet{Guo2005-bm}). The mutual information between $Y$ and $Z$ -- the amount of auxiliary signal retained after measurement -- follows from a standard expression for Gaussian vector noise channels \cite{Guo2005-bm, Polyanskiy2024-nx} (proof in Appendix~\ref{app:proof}):
\[
I(Y;Z) = \tfrac{1}{2}
\log \frac{\det(\Sigma_Y+\Sigma_U+\eta^{-1}I_n)}
               {\det(\Sigma_U+\eta^{-1}I_n)}.
\]

For the scalar case ($n=1$), where $\Sigma_Y=\sigma_Y^2$ and 
$\Sigma_U=\sigma_U^2$,
\begin{equation}\label{eqn:toy}
I(Y;Z) = \tfrac{1}{2}\log
\frac{\eta(\sigma_Y^2+\sigma_U^2)+1}
     {1+\sigma_U^2\eta}.
\end{equation}

Two characteristic quantities govern this scaling:
\[
\mathcal{I}_{\max} = \lim_{\eta\to\infty} I(Y;Z)
= \tfrac{1}{2}\log\frac{\sigma_Y^2+\sigma_U^2}{\sigma_U^2},
\]

which reflects the maximal information achievable with a noiseless measurement, and 
$\bar{\eta} = 1/\sigma_U^2$, an effective noise scale. Substituting $\mathcal{I}_{\max}$ and $\bar{\eta}$ into $I(Y;Z)$ recovers precisely the 
empirical noise-scaling relationship of Eq.~\ref{eqn:scaling_general}.  
Despite its simplicity, this model captures the universal shape of the 
performance–noise curves observed across datasets and architectures.
} 
\end{spacing}

\end{mybox}

\clearpage
\subsection*{Generalization of noise scaling}\label{generalization}
The scaling law (Eq.~\ref{eqn:scaling_general}) depends only on the signal-to-noise ratio $\eta$, and a model that explains this law ({\bf Box~\ref{box:theory}}) is not specific to transcriptomic data. To test whether the same scaling relationship generalizes to other models, we examined noise–performance relationships in image representation models, and then in protein sequence models. 

For image representation, we used MobileNetV2, a lightweight convolutional architecture designed for image classification \cite{Sandler2018-bt}. We trained and evaluated this model on two different image datasets (1) a 5-class subset of the Caltech101 dataset \cite{Fei-Fei2004-ux}, consisting of $240\times240$ pixel images with 2,707 total images and (2) a fluorescence microscopy dataset of 236,386 human kidney cortex cells annotated with one of eight cell type labels \cite{Yang2023-eg}. Images were perturbed with two distinct forms of degradation: additive Gaussian noise and reduced spatial resolution. Pixel-wise Gaussian noise is common in imaging measurements \cite{Da_Costa2016-pm}. We then used auxiliary-MI to evaluate model performance under both forms of image noise, here measuring the mutual information between the learned representations and the true image labels. We trained the MobileNetV2 models and computed the auxiliary-MI between predicted and true labels on held-out images, assessing performance for two tasks: classification of all class labels, as well as multiple one-vs-all problems.  We introduced Gaussian noise with $\eta = 1/\sigma_N^2$, where $\sigma_N$ is the noise standard deviation, while resolution degradation was introduced by averaging local pixel neighborhoods, with $\eta = 1/f$ for downsampling factor $f$. For both types of noise, we found that Eq.~\ref{eqn:scaling_general} accurately reproduced the observed noise–performance curves for all classification tasks ({\bf Figs.~\ref{fig:bigfig}g,h}, $R^2=0.984 \pm 0.004$ for Caltech101 and $R^2 = 0.979 \pm 0.005$ for kidney cortex models).

A similar pattern emerged in representation learning models of protein sequence. We finetuned ESM2 models (8M, 35M, and 150M parameter variants) \cite{Rives2021-ws} on a set of $\sim$63,000 SARS-CoV-2 spike protein sequences spanning the course of the pandemic \cite{Shu2017-au}, after introducing controlled levels of amino-acid substitution to simulate increasingly corrupted measurements. We then quantified auxiliary-MI between the learned representations and the collection date of each sequence, measured as number of months since the pandemic outbreak in January 2020 (with sequences up to April 2025). Despite the discrete and highly structured nature of protein sequences, the resulting noise–performance curves again closely match the form predicted by Eq.~\ref{eqn:scaling_general}, with increasing substitution rates driving systematic and predictable declines in mutual information ({\bf Fig.~\ref{fig:bigfig}i}). 

The shared behavior of these protein sequence and imaging models is demonstrated by collapsing the 33 additional image and sequence curves by appropriate rescaling in {\bf Fig.~\ref{fig:bigfig}e}.  Together, this analysis adds to the evidence that noise in training data is a systematic determinant of representation quality -- one that can be modeled alongside sample size when characterizing learning behavior.
\subsection*{Noise scaling and experimental design}

Measurement noise scaling laws can be used to determine the data quality or sample quantity required to achieve a specified level of representation performance. The parameter $\bar{\eta}$ from Eq.~\ref{eqn:scaling_general} directly reports the measurement sensitivity at which model performance reaches within 0.5 bits (or approximately 70\%) of its asymptotic value. More generally, inverting Eq.~\ref{eqn:scaling_general} yields a function $\eta(\mathcal{I})$ that predicts the minimum signal-to-noise ratio needed for a model to achieve a desired information content with respect to a given auxiliary variable (see \textbf{Supplemental Text~\ref{sec:app_UMI90}}, Eqn.~\ref{eqn:umi90}).

For transcriptomic data, $\eta$ is proportional to the total UMIs per cell, enabling an estimate of the sequencing depth needed for a representation to reach, for example, 90\% of its maximum informativity. These depth requirements (${ \rm UMI}{90}$) are reported for all model–task pairs in \textbf{Supplemental Table~\ref{tab:u90}}. Several clear patterns emerge. Geneformer consistently operates above its ${\rm UMI}{90}$ on all datasets examined, indicating that its performance is already near its asymptotic limit under typical sequencing depths. In contrast, ${\rm UMI}{90}$ for scVI exceeds the observed UMI counts for protein abundance, spatial information, and clonal information tasks—suggesting that these tasks remain sensitivity-limited and would benefit from deeper sequencing. 
These examples illustrate how noise-scaling relationships can guide the choice of models and allocation of sequencing depth across tasks with different intrinsic difficulty.

The same experimental-design questions can be asked for image and protein-sequence representation learning. For image classification ({\bf Fig.~\ref{fig:bigfig}g,h}), the fitted $\bar{\eta}$ values indicate, for example, the image resolution necessary for a given task. For kidney cell type annotation task, under pixelation noise, $\eta_{90}\approx 3\cdot 10^{-3}$ corresponding to an effective resolution threshold of $\sim 15 \times 15$ pixels: images downsampled beyond this point lose more than $\sim 0.15$ bits of label information, corresponding to $10\%$ information loss. Certain classes (e.g., Podocyte) exhibited a steeper performance decay (larger $\bar{\eta}$), indicating that their recognition relies on higher-resolution features. 
For protein sequence, auxiliary-MI remained stable up to substitution rates of approximately 1 in 1,000 amino acids -- a noise level well above what is typical in modern sequencing. This is consistent with DNA and protein sequence models to date being able to largely ignore measurement noise.
\section*{Discussion}

Noise in training data inevitably affects model performance, but it has remained unclear whether there exist predictable, quantitative rules governing how representation quality degrades as noise increases. Across single-cell transcriptomics, imaging, and protein sequence data, we find that auxiliary-task performance follows a characteristic sigmoidal scaling curve. Models retain robust performance above a modality-specific noise threshold, after which representation quality declines approximately logarithmically with increasing noise. A simple information-theoretic model captures this relationship and recovers the empirical scaling form observed across more than $10^3$ learned representations. These results suggest that predictable noise-dependent learning curves may be a common feature across diverse biological data modalities.

This work provides practical guidance for designing and evaluating biological representation models. The fitted scaling parameters $\bar{\eta}$ and $\mathcal{I}_{\max}$ jointly characterize model behavior: $\bar{\eta}$ reflects noise robustness, while $\mathcal{I}_{\max}$ represents the maximal task-relevant information that a model can encode. Nonlinear models such as Geneformer and scVI exhibit substantially greater robustness to measurement noise than PCA, consistent with the expectation that nonlinear architectures more effectively denoise sparse molecular measurements. However, robustness alone is insufficient. Geneformer, despite its stability under noise, often attains a relatively low $\mathcal{I}_{\max}$, capturing less auxiliary information than scVI and, for certain tasks, even linear baselines. These results emphasize that noise robustness and representational capacity must be jointly optimized in model design.

Noise scaling has implications for experimental design, particularly for large-scale single-cell profiling. Our analysis shows that some tasks, such as our test tasks of predicting surface-protein or spatial information from scRNA-seq, remain sensitivity-limited even in current datasets. These tasks would benefit substantially from higher per-cell transcript counts. Conversely, for tasks such as predicting developmental stage in the mouse embryo atlas, existing sequencing depth is already sufficient to approach the representational limit. These distinctions highlight that improvements in measurement quality, rather than cell number alone, may be the most impactful direction for next-generation atlases and molecular profiling initiatives.

More broadly, considering measurement noise as an additional scaling axis suggests a more complete picture of representation learning in `measurement-bound' fields such as biology, complementing the well-established roles of dataset and model size in neural scaling. Noise imposes a predictable, quantifiable constraint that can be analytically modeled and experimentally addressed with suitable trade-offs. Often, there may be a trade-off between sample number and measurement sensitivity. One can design assays that sit on or near the optimal learning curve for a given task.


Several questions remain. First, an open theoretical question is to understand the origin of the scaling law. The model we introduce here ({\bf Box~\ref{box:theory}}) is exact for scalar Gaussian channels, yet it fits high-dimensional biological data surprisingly well. Understanding why this is the case, and under what conditions noise scaling breaks down, represents a theoretical direction. Second, we have still only demonstrated noise scaling in a small number of modeling tasks. Third, even for the tasks at hand, we have only evaluated a small number of model architectures. The high cost of training foundation models makes it impractical for us to evaluate additional models. It is possible that finetuning of pre-trained models may provide a faithful probe of noise-tolerances of a model, allowing systematic evaluation of additional models. Finally, our analysis has treated measurement noise and dataset size separately, and has not considered model size or compute budget; developing a joint scaling law that unifies multiple resource axes would further clarify how to allocate resources to build predictive models of high-dimensional biological systems.

In sum, our findings suggest that measurement noise is a predictable and actionable determinant of representation model performance, one that can be optimized alongside dataset size to guide both model development and experimental design across biological modalities.

\backmatter

\bmhead{Code availability}

All code necessary to reproduce the results in this work is provided at \href{https://github.com/g-kl/noise-scaling/}{this GitHub repository} (\verb|g-kl/noise-scaling|).

\bmhead{Acknowledgments}

This work is supported by funding from NIH Pioneer Award DP1GM133052, R01HG012926 to P.Y., and Molecular Robotics Initiative at the Wyss Institute. A.M.K. acknowledges support of an Edward Mallinckrodt Jr. Scholar Award. G.G. acknowledges support from the Tayebati Postdoctoral Fellowship Program.

\clearpage

\begin{appendices}
\renewcommand{\appendixname}{}
\renewcommand{\thesection}{S.\Roman{section}}

\setcounter{figure}{0}
\renewcommand{\thefigure}{S\arabic{figure}}
\setcounter{table}{0}
\renewcommand{\thetable}{S\arabic{table}}
\setcounter{equation}{0}
\renewcommand{\theequation}{S\arabic{equation}} 

\begin{center}
\section*{Supplemental Text}
\end{center}
\addcontentsline{toc}{section}{Supplemental Text}

\section{Data preprocessing methods}\label{app:details_data}

In this section, we summarize key details of our data preprocessing methods for each of the data sets. We also provide an annotated codebase in the supplemental files. 

\subsection{MERFISH mouse brain dataset for spatial information probing}

We use the 67,821 single cell transcriptomes measured in coronal section 1 of replicate 1 in the Vizgen mouse brain data release \cite{Vizgen2021-fo}. We remove ``blank'' measurements from the dataset, leaving 649 measurement dimensions. We define cell location by the center coordinate of the cell segmentation mask (which is provided in the metadata of the dataset). We construct a paired dataset of neighboring cells by randomly selecting one of the 5 nearest cells as the neighbor pair for each cell in the dataset. Information probing then measures the information each cell representation contains about its neighbor pair.

\subsection{LARRY hematopoiesis dataset for clonal information probing}

We use the \textit{in vitro} differentiation dataset from \citet{Weinreb2020-uk}. We pair clonally related cells as follows. We first separate the dataset into cells profiled at early timepoints (day 2 and day 4), and final day 6 timepoint. Then, we subset the dataset for cells whose clonal barcodes appear in both early and late timepoints. For each remaining clone, we randomly select a cell from the early timepoint and pair it with a randomly selected cell from a late timepoint. Information probing then measures the information each cell representation contains about its clonally related pair.

\subsection{CITE-seq PBMC dataset for protein abundance information probing}

We use the CITE-seq PBMC dataset from \citet{Hao2021-yu}, with the count matrix preprocessed and distributed by \verb|scvi-tools| \cite{Virshup2023-vh}. Information probing measures the mutual information between transcriptome representations and protein abundance vectors.

\subsection{Caltech101}

We use the Caltech101 \cite{Fei-Fei2004-ux} as distributed by PyTorch \cite{Paszke2019-bf}. We rescale pixel intensity values to \verb|[-1, 1]|, and crop images to $240 \times 240$ pixels. We then select the 5 classes with the largest number of images and subset only images from those 5 classes. This leaves a total of 2707 images. To downsample resolution by factor $f$, we tile the image in $240/\sqrt{f} \times 240/\sqrt{f}$ and each pixel is reassigned with the mean pixel value within its respective tile, in effect pixelating the image. To add Gaussian noise, we sample a $240 \times 240$ matrix i.i.d. Gaussians with 0 mean and specified variance for each image and add it to the pixel values.

\subsection{Kidney Cortex}

Kidney tissue nuclear stain images were obtained from the MedMNIST dataset \cite{Yang2023-eg} at size $224\times224$ pixels and were preprocessed through a standardized transformation pipeline. First, pixel values were normalized to the range [0, 1]. As the images contain a single channel (DAPI), channels were replicated to create 3-channel RGB-format images by repeating the single channel three times. Then, the noising process -- either pixel-wise additive Gaussian noise or patch-based pooling (as in the Caltech101 experiments) to simulate pixelation -- was applied. 

\subsection{SARS-CoV2 sequences}

We obtain SARS-CoV-2 spike protein sequences from the GISAID database \cite{Shu2017-au} until the month of 04/2025. For practicality, we use a subsample of sequences. As sequences from certain months (e.g., early 2021) are highly overrepresented, we chose not to uniformly randomly subsample the sequences. Instead, we sampled with a cap of 1000 sequences per collection month. This results in a total of 63,374 sequences across 71 months. We then randomly split this data into 75\% training sequences and 25\% test sequences. Then for  each noise level, we randomly replace amino acids with a new amino acid uniformly sampled from the alphabet with a rate according to the noise level. We then tokenize the noised sequence using the ESM2 tokenizer distributed by HuggingFace \cite{Wolf2020-jb} (\verb|esm2_t6_8M_UR50D|).

\section{Model implementation details} \label{app:details_model}

Below we summarize the implementation details of the models we study in this work.

\subsection{Random projection implementation}

Random projection was implemented using \texttt{sklearn.random\_projection.GaussianRandomProjection}\cite{Pedregosa2011-ed} with 16 components and random state 42. The method operates on gene expression without normalization (i.e., without rescaling or highly variable gene selection).

\subsection{PCA implementation}

We first further preprocess the count matrix by rescaling counts to $10^4$ per cell, and log transforming. We then standardize each gene to zero mean and unit variance and subset to the 750 highly variable genes. Next, we compute principal components using the truncated SVD method implemented in \verb|sklearn| \cite{Pedregosa2011-ed}.

\subsection{Geneformer}

Geneformer was implemented using the original repository \cite{Theodoris2023-iw} as a BERT-based transformer (\verb|BertForMaskedLM|). Training was performed using the HuggingFace Trainer API with length-grouped batching to improve efficiency by grouping cells with similar numbers of expressed genes. To ensure consistent training across diverse dataset sizes, epochs scale dynamically as $\max(1, \lfloor 10 \times (10,000,000 / \text{dataset\_size}) \rfloor)$, ensuring smaller datasets receive proportionally more training while larger datasets train for fewer epochs, thus maintaining a constant effective training budget. Early stopping prevents overfitting by monitoring validation loss, and the model uses masked language modeling with 15\% token masking. Parameter count scales linearly with vocabulary size (number of genes) due to the embedding layer and language model head. Measured parameter counts across datasets range from 1.9M to 13.5M depending on the number of genes in each dataset (\textbf{Supplemental Table~\ref{tab:parameter_counts}}). Complete hyperparameters are provided in \textbf{Supplemental Table~\ref{tab:geneformer_params}}.


\textbf{Supplemental Fig. ~\ref{fig:geneformer_training_loss}} shows the training and validation loss curves for the Geneformer model trained on the developmental task using the full dataset of 10 million cells at full quality level. The model demonstrates stable convergence with both the training and validation losses decreasing monotonically, and the validation loss closely tracks the training loss, indicating good generalization without overfitting. Early stopping was triggered based on validation loss.

\begin{figure}[!htbp]
\centering
\includegraphics[width=0.4\textwidth]{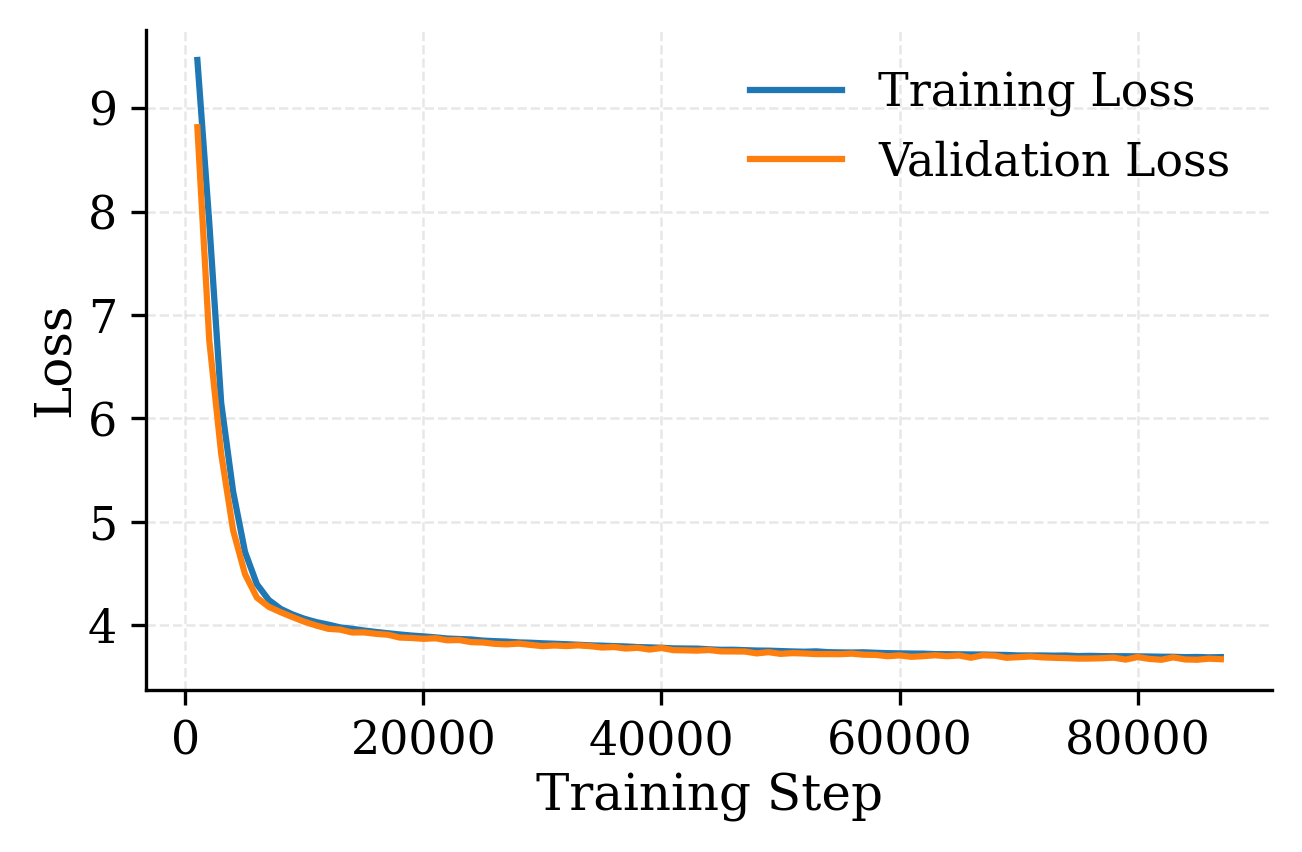}
\caption{\textbf{Loss curves for a representative large-scale model.} Training and validation loss curves for a Geneformer trained on the developmental dataset \cite{Qiu2024-cq} with $\sim10^7$ cells at full transcript count (no artificial noising). The model shows stable convergence.}
\label{fig:geneformer_training_loss}
\end{figure}

\begin{table}[!htbp]
\centering
\caption{Hyperparameters and Implementation Details for Geneformer}
\label{tab:geneformer_params}
\small
\begin{tabular}{ll}
\toprule
\textbf{Parameter} & \textbf{Value} \\
\midrule
\multicolumn{2}{l}{\textit{Model Architecture}} \\
Number of layers & 3 \\
Hidden dimension & 256 \\
Attention heads & 4 \\
Intermediate size & 512 \\
Max sequence length & 512 tokens \\
Activation function & ReLU \\
Attention dropout & 0.02 \\
Hidden dropout & 0.02 \\
Initializer range & 0.02 \\
Layer norm epsilon & $1 \times 10^{-12}$ \\
\midrule
\multicolumn{2}{l}{\textit{Training Hyperparameters}} \\
Learning rate & $1 \times 10^{-3}$ \\
Optimizer & AdamW \\
Weight decay & 0.001 \\
Train batch size (per device) & 64 \\
Eval batch size (per device) & 100 \\
\midrule
\multicolumn{2}{l}{\textit{Learning Rate Schedule}} \\
LR scheduler type & Linear \\
Warmup steps & 5,000 \\
\midrule
\multicolumn{2}{l}{\textit{Training Configuration}} \\
Evaluation strategy & Steps \\
Evaluation steps & 1,000 \\
Max epochs & Dynamic (depends on dataset size) \\
\midrule
\multicolumn{2}{l}{\textit{Early Stopping}} \\
Early stopping & Based on validation loss \\
Patience & 5 steps \\
\bottomrule
\end{tabular}
\end{table}

\subsection{scVI}

We use the scVI \cite{Lopez2018-zz} implementation distributed by \verb|scverse| \cite{Virshup2023-vh} (\texttt{scvi.model.SCVI}) with a shallow architecture. The model uses gene-specific dispersion and zero-inflated negative binomial (ZINB) likelihood. Parameter count scales with the number of input genes due to gene-specific parameters in the decoder, and range from 1.3M to 116M (see \textbf{Supplemental Table~\ref{tab:parameter_counts}}). Complete hyperparameters are provided in \textbf{Supplemental Table~\ref{tab:app_scvi_details}}.

Similar to Geneformer, epochs scale dynamically to maintain consistent number of steps across dataset sizes. KL divergence warmup is used to prevent the model from collapsing to the prior early in training. SCVI operates on raw count data without further preprocessing. We do not select for highly variable genes.

\begin{table}[!htbp]
\centering
\caption{Hyperparameters and Implementation Details for SCVI}
\label{tab:app_scvi_details}
\small
\begin{tabular}{ll}
\toprule
\textbf{Parameter} & \textbf{Value} \\
\midrule
\multicolumn{2}{l}{\textit{Model Architecture}} \\
Hidden size & 512 \\
Latent dimension & 16 \\
Number of hidden layers & 1 \\
Dropout rate & 0.1 \\
Dispersion & Gene-specific \\
Gene likelihood & Zero-inflated negative binomial (ZINB) \\
Latent distribution & Normal \\
\midrule
\multicolumn{2}{l}{\textit{Training Hyperparameters}} \\
Learning rate & $1 \times 10^{-3}$ \\
Optimizer & Adam \\
Weight decay & $1 \times 10^{-6}$ \\
Adam epsilon & 0.01 \\
Batch size & 512 \\
\midrule
\multicolumn{2}{l}{\textit{Learning Rate Schedule}} \\
LR scheduler & Reduce on plateau \\
LR scheduler metric & Validation ELBO \\
Minimum LR & $1 \times 10^{-6}$ \\
\midrule
\multicolumn{2}{l}{\textit{KL Annealing}} \\
KL warmup epochs & 1 \\
Max KL weight & 1.0 \\
Min KL weight & 0.0 \\
\midrule
\multicolumn{2}{l}{\textit{Training Configuration}} \\
Train/Val split & 80\% / 20\% \\
Shuffle split & True \\
Max epochs & Dynamic (depends on dataset size) \\
\midrule
\multicolumn{2}{l}{\textit{Early Stopping}} \\
Early stopping & Based on validation ELBO \\
Patience & 5 epochs \\
Min delta & 0.01 \\
\bottomrule
\end{tabular}
\end{table}

\begin{table}[!htbp]
\centering
\caption{Parameter counts for SCVI and Geneformer models across different datasets (millions of parameters)}
\label{tab:parameter_counts}
\small
\begin{tabular}{lcc}
\toprule
\textbf{Dataset} & \textbf{SCVI (M)} & \textbf{Geneformer (M)} \\
\midrule
Larry & 64.9 & 8.3 \\
MERFISH & 1.3 & 1.9 \\
PBMC & 53.2 & 7.1 \\
Shendure & 116.8 & 13.5 \\
\bottomrule
\end{tabular}
\end{table}

\subsection{MobileNet for image classification}
We finetune the ImageNet pretrained MobileNetV3 architecture distributed with PyTorch \cite{Paszke2019-bf}. To adapt it to our 5-class subset of Caltech101, the final classification layer is replaced with a fully connected layer with a 5-dimensional output. To adapt it to the 8-class kidney cortex dataset, we similarly replace the classification layer with a 8-dimensional output. For Caltech101, we use a $1:1$ train-test split, and optimize a cross-entropy loss for 5 epochs. For the kidney cortex dataset, we use the train-test split provided with MedMNIST \cite{Yang2023-eg} (165,466 and 47,280 images respectively), and train for 30 epochs. Training and implementation details for the kidney cortex dataset experiments are provided in Table~\ref{tab:app_mobilenet_details}.

\begin{table}[h]
\centering
\caption{Hyperparameters and Implementation Details for Kidney Cortex Cell Type Annotation with MobileNetV3}
\begin{tabular}{ll}
\toprule
\textbf{Parameter} & \textbf{Value} \\
\midrule
\multicolumn{2}{l}{\textit{Model Architecture}} \\
Base model & MobileNetV3-Small \\
Weight initialization & ImageNet pretrained (\verb|IMAGENET1K_V1|) \\
Input Channels & 3 (grayscale converted by repeating) \\
Image Size & 224 $\times$ 224 \\
\midrule
\multicolumn{2}{l}{\textit{Training Hyperparameters}} \\
Optimizer & Adam \\
Learning rate & $1 \times 10^{-3}$ \\
Finetuning objective & CrossEntropyLoss \\
Batch size & 512 \\
Epochs & 30 \\
\midrule
\multicolumn{2}{l}{\textit{Embedding Extraction}} \\
Representation & Last layer before classifier \\
\midrule
\multicolumn{2}{l}{\textit{Information probing}} \\
Method & Latent mutual information with 16 latent dimensions \\
Auxiliary signal & Class label (either one-vs.-all binary, or one-hot 8-way) \\
\bottomrule
\end{tabular}
\label{tab:app_mobilenet_details}
\end{table}

\subsection{ESM2 for SARS-CoV2 spike protein sequence representations}

We obtained pretrained ESM2 models of three different sizes (8M, 35M, 150M) as distributed with the HuggingFace \verb|transformers| library \cite{Wolf2020-jb}. We initialize models at the pretrained weights, and train them for a single epoch on the collection of SARS-CoV2 protein sequences curated from GISAID \cite{Shu2017-au} (preprocessed as described in \textbf{Supplemental Text~\ref{app:details_data}}). Training details including hyperparameters are summarized in \textbf{Supplemental Table~\ref{tab:app_esm_details}}.

\begin{table}[h]
\centering
\caption{Hyperparameters and Implementation Details for Sequence Experiments with ESM}
\begin{tabular}{ll}
\toprule
\textbf{Parameter} & \textbf{Value} \\
\midrule
\multicolumn{2}{l}{\textit{Model Architecture}} \\
Base models & ESM2 (8M, 35M, 150M parameters) \\
Finetuning objective & Masked Language Modeling (15\% masking) \\
Max sequence length & 1024 \\
\midrule
\multicolumn{2}{l}{\textit{Hyperparameters}} \\
Optimizer & AdamW \\
Learning rate & $1 \times 10^{-4}$ \\
Batch size & 8 \\
Epochs & 1  \\
\midrule
\multicolumn{2}{l}{\textit{Embedding Extraction}} \\
Pooling method & Mean pooling of last hidden state \\
\midrule
\multicolumn{2}{l}{\textit{Information Probing}} \\
Method & Latent mutual information with 16 latent dimensions \\
Auxiliary signal & Collection month (months since 01/2020) \\
\bottomrule
\end{tabular}
\label{tab:app_esm_details}
\end{table}
\section{Additional details on cell number scaling behavior}\label{app:cellnumberdetails}




Because the focus of this paper is not on sample number scaling, we confine to this supplemental note a discussion on the differences we observed between learning rates and learning capacity of the modeling approaches with respect to cell number. From the scaling relationship discussed in the main text, the saturating performance, $I_\infty$, measures the capacity of a model to capture auxiliary information in the limit of infinite data. The saturation scale parameter $N_{\text{sat}}$ quantifies the number of cells required to approach saturation, specifically, to be within 1 bit of $I_\infty$. The scaling exponent $s$ describes the model's sensitivity to new data (when $N\lesssim N_\textrm{sat})$. The estimated parameters for each model and task are shown in \textbf{Supplementary Fig.~\ref{fig:cellnumberparams}}.

scVI consistently showed the highest saturating performance $I_\infty$, suggesting it learns the highest quality representations given sufficient data. The representations learned by Geneformer are less informative, with discrepancies ranging from 0.6 to 1.1 bits across tasks compared to scVI -- corresponding to approximately halving the complexity of the captured signal. 

PCA alone was competitive with Geneformer for some tasks: it showed lower saturating performance than Geneformer in capturing protein abundance and developmental time, but surpassed Geneformer in capturing clonal and spatial information. This suggests that Geneformer is not well-suited to these tasks. 

The models also differ considerably in their saturation scale. PCA learns with very little data -- with $N_{\text{sat}}$ in the tens of cells for all metrics. This indicates that PCA representation quality saturates almost immediately. This rapid convergence is expected of linear models \cite{Cai2012-me} and suggests that PCA representations do not benefit from large dataset sizes. While Geneformer and scVI saturate much less rapidly, most models and metrics still have an $N_{\text{sat}} < 10^4$ cells. As this is lower than the actual dataset sizes, it suggests model performance on the evaluated metrics is largely saturated with respect to cell number.

Finally, the scaling exponent $s$ describes the model's sensitivity to dataset size prior to saturation. PCA consistently demonstrates the largest $s$ across all tasks. This indicates a high initial sensitivity to cell number, but this steep improvement reaches saturation quickly, as shown by $N_\text{sat}$.

A related work, \citet{DenAdel2024-ue}, studies model performance as a function of pretraining dataset size across a different set of tasks and metrics. \citet{DenAdel2024-ue} compute a ``learning saturation point'', the dataset size at which 95\% of the maximum performance is achieved with respect to tasks like cell type annotation and batch integration. While not quantitatively comparable to a scaling law parameter, this metric describes the same saturation phenomenon as $N_{\text{sat}}$ in this work. The results of both works are qualitatively similar: all models have early saturation points, and linear models have earlier saturation points than deep-learning approaches.

\begin{figure}
    \centering
    \includegraphics[width=0.9\linewidth]{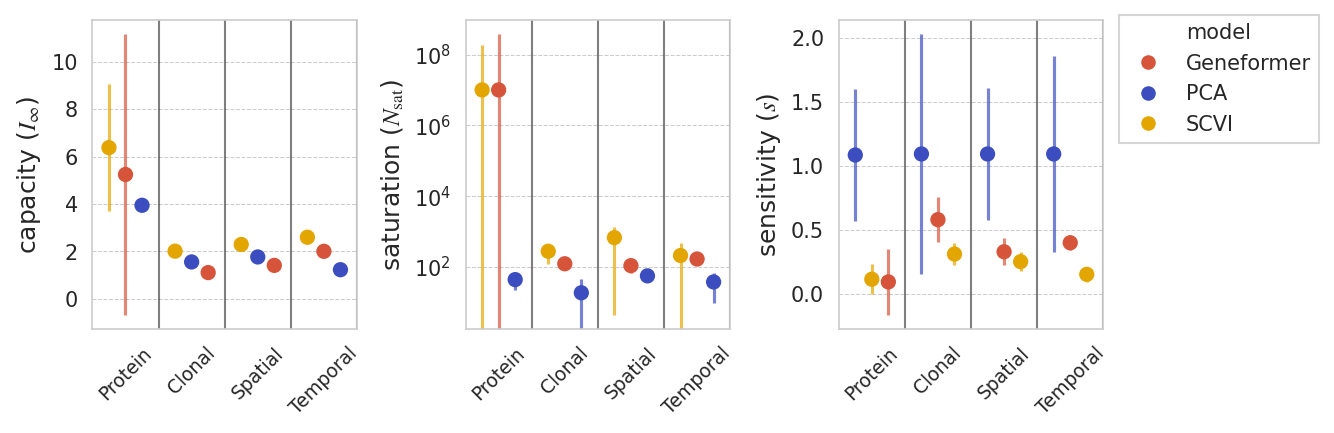}
    \caption{Comparison of cell number scaling parameters across model families and representation quality metrics. Error bars denote $2\sigma$ confidence interval. Random projection is omitted due to lack of scaling behavior. All parameters are estimated from datasets without artificial noise.}
    \label{fig:cellnumberparams}
\end{figure}

\section{Analytical results for a toy model of noise propagation}\label{app:proof}

Let $X, Y$ be multivariate Gaussian random vectors representing signals distributed as follows
$$Y \sim \mathcal{N}(\mu_Y, \Sigma_Y)$$
$$X = Y + U$$
where $U \sim \mathcal{N}(0, \Sigma_U)$.

Next, let $Z$ be a random vector representing a noisy measurement of $X$ with signal-to-noise ratio $\eta$:
$$Z = \sqrt{\eta} X + \mathcal{N}(0, I_n)$$ 

In our empirical results for transcriptomic data, $X$ corresponds to the true transcript counts, $Y$ corresponds to an auxiliary signal, and $Z$ corresponds to the representation extracted from a noisy measurement of $X$. We are interested in how $I(Y; Z)$ scales as a function of $\eta$. We next show that in the above toy model, the relationship between $\eta$ and $I(Y;Z)$ can be exactly specified.

\begin{theorem}[Theorem 3.1]
For the three variable Gaussian noise model specified above,

\begin{equation}
I(Y; Z) = \frac{1}{2} \log \frac{\det(\Sigma_Y + \Sigma_U + \eta^{-1}I_n)}{\det(\Sigma_U + \eta^{-1}I_n)}    
\end{equation}

In the special case where $n = 1$, denoting the variances $\sigma^2_Y, \sigma^2_U$:

\begin{equation}\label{main}
I(Y; Z) = \frac{1}{2} \log \frac{\eta (\sigma^2_Y + \sigma_U^2) + 1}{1 + \sigma_U^2 \eta}    
\end{equation}

\end{theorem}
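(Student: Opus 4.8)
The plan is to evaluate $I(Y;Z)$ directly via the identity $I(Y;Z)=h(Z)-h(Z\mid Y)$ in differential entropies, exploiting that every random vector in sight is jointly Gaussian. First I would observe that $Z=\sqrt{\eta}\,X+W$ with $W\sim\mathcal N(0,I_n)$ independent of $X$, and $X=Y+U$ with $U\sim\mathcal N(0,\Sigma_U)$ independent of $Y$; hence $(Y,Z)$ is a jointly Gaussian pair and the marginal law of $Z$ is $\mathcal N\!\big(\sqrt{\eta}\,\mu_Y,\ \eta(\Sigma_Y+\Sigma_U)+I_n\big)$. Using the formula $h(\mathcal N(\cdot,\Sigma))=\tfrac12\log\!\big((2\pi e)^n\det\Sigma\big)$ for the entropy of an $n$-dimensional Gaussian then gives $h(Z)$.

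Next I would compute the conditional law. Conditioning on $Y=y$ makes $X\mid Y=y\sim\mathcal N(y,\Sigma_U)$, hence $Z\mid Y=y\sim\mathcal N(\sqrt{\eta}\,y,\ \eta\Sigma_U+I_n)$ — a Gaussian whose covariance is independent of $y$, so $h(Z\mid Y)=\tfrac12\log\!\big((2\pi e)^n\det(\eta\Sigma_U+I_n)\big)$. Subtracting, the $(2\pi e)^n$ factors cancel and $I(Y;Z)=\tfrac12\log\frac{\det(\eta(\Sigma_Y+\Sigma_U)+I_n)}{\det(\eta\Sigma_U+I_n)}$. Extracting a factor $\eta$ from each determinant contributes $\eta^n$ to numerator and denominator alike, which cancels, leaving $\tfrac12\log\frac{\det(\Sigma_Y+\Sigma_U+\eta^{-1}I_n)}{\det(\Sigma_U+\eta^{-1}I_n)}$, the claimed identity. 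The scalar case $n=1$ is then just the substitution $\Sigma_Y=\sigma_Y^2,\ \Sigma_U=\sigma_U^2$ followed by multiplying through by $\eta$ to clear $\eta^{-1}$, yielding $\tfrac12\log\frac{\eta(\sigma_Y^2+\sigma_U^2)+1}{1+\sigma_U^2\eta}$.

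I do not expect a genuine obstacle here; the only points needing care are (i) noting that $\eta(\Sigma_Y+\Sigma_U)+I_n$ and $\eta\Sigma_U+I_n$ are positive definite since $I_n\succ0$ and $\Sigma_Y,\Sigma_U\succeq0$, so the determinants and logarithms are well defined (and remain finite even if $\Sigma_Y$ or $\Sigma_U$ is singular), and (ii) invoking the standard $I(Y;Z)=h(Z)-h(Z\mid Y)$ decomposition, valid once the relevant densities are seen to exist. An equivalent route that makes the link to the Gaussian vector channel formula of \citet{Guo2005-bm} explicit is to write the joint covariance of $(Y,Z)$ as $\begin{pmatrix}\Sigma_Y & \sqrt{\eta}\,\Sigma_Y\\ \sqrt{\eta}\,\Sigma_Y & \eta(\Sigma_Y+\Sigma_U)+I_n\end{pmatrix}$ and apply $I(Y;Z)=\tfrac12\log\frac{\det\Sigma_Y\det\Sigma_Z}{\det\Sigma_{(Y,Z)}}$ together with the Schur-complement identity $\det\Sigma_{(Y,Z)}=\det\Sigma_Y\cdot\det(\eta\Sigma_U+I_n)$; this recovers the same expression, and I would prefer the entropy-difference derivation as the primary argument because it sidesteps any degeneracy of $\Sigma_Y$.
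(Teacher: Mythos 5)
Your proposal is correct, and it reaches the result by a genuinely different (if closely related) route from the paper. The paper's proof does not compute entropies directly: it first rewrites $Z$ as a Gaussian channel acting on $Y$, i.e.\ $Z=\sqrt{\eta}\,Y+\mathcal N(0,\eta\Sigma_U+I_n)$, then invokes invariance of mutual information under the invertible rescaling $Z\mapsto \eta^{-1/2}Z$ to reduce to $I\bigl(Y;\,Y+\mathcal N(0,\Sigma_U+\eta^{-1}I_n)\bigr)$, and finally quotes the standard vector Gaussian channel formula $I(X;X+N)=\tfrac12\log\bigl(\det(\Sigma_X+\Sigma_N)/\det\Sigma_N\bigr)$ from \citet{Polyanskiy2024-nx,Guo2004-yr}. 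Your argument instead derives everything from $I(Y;Z)=h(Z)-h(Z\mid Y)$ with explicit Gaussian entropies, obtaining $\tfrac12\log\bigl(\det(\eta(\Sigma_Y+\Sigma_U)+I_n)/\det(\eta\Sigma_U+I_n)\bigr)$ and cancelling the $\eta^n$ factors; this is essentially a self-contained re-derivation of the lemma the paper cites, fused with the same algebra. What your route buys: no appeal to an external channel-capacity result, no need for the scale-invariance step, clean handling of a possibly singular $\Sigma_Y$ (only the conditional and marginal covariances, both $\succ 0$ thanks to the $I_n$ term, need to be nondegenerate), and as a side effect it makes transparent that the intermediate noise covariance is $\eta\Sigma_U+I_n$ (the paper briefly writes $\eta^{2}\Sigma_U+I_n$ in one display, evidently a typo, before continuing with the correct expression). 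What the paper's route buys: brevity given the cited lemma, and an explicit link to the Gaussian-channel framework of Guo et al.\ that motivates the SNR parametrization in Box~1. Your alternative Schur-complement derivation is also correct, but as you note it requires $\Sigma_Y$ invertible, so the entropy-difference version is the right one to lead with.
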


\begin{proof}

We build on a result for Gaussian vector noise channels (see \cite{Polyanskiy2024-nx, Guo2004-yr}) which states that for independent Gaussian random vectors $X, N$, 
$$I(X; X+N) = \frac{1}{2} \log \frac{\det (\Sigma_x + \Sigma_N)}{\det(\Sigma_N)}$$
where $\Sigma_X, \Sigma_N$ are the covariance matrices of $X, N$.

We will begin by rewriting $Z$ in terms of $Y$. From definitions, we have
$$Z = \sqrt\eta X + \mathcal{N}(0, I_n)$$
$$= \sqrt\eta (Y + U) + \mathcal{N}(0, I_n)$$
$$= \sqrt\eta (Y + \mathcal{N}(0, \Sigma_U)) + \mathcal{N}(0, I_n)$$

Due to closure rules for Gaussians, we can rewrite
$$Z = \sqrt\eta Y + \mathcal{N}(0, \eta^2\Sigma_U + I_n)$$

Next, we observe that due to the scale invariance of mutual information \cite{Thomas-M-Cover2006-xh}
$$I(Y; Z) = I(Y; \eta^{-1/2} Z)$$
$$= I(Y; Y + \eta^{-1/2} \mathcal{N}(0, \eta \Sigma_U +  I_n))$$
$$= I(Y; Y + \mathcal{N}(0, \Sigma_U + \eta^{-1}I_n))$$

Now we can directly apply the Gaussian vector channel result:
$$I(Y;Z) = I(Y; Y + \mathcal{N}(0, \Sigma_U + \eta^{-1}I_n)) $$
$$= \frac{1}{2} \log \frac{\det(\Sigma_Y + \Sigma_U + \eta^{-1}I_n)}{\det(\Sigma_U + \eta^{-1}I_n)}$$

And in the special case where $n=1$, we have that
$$I(Y; Z) = \frac{1}{2} \log \frac{\sigma_Y^2 + \sigma_U^2 + \eta^{-1}}{\sigma_U^2 + \eta^{-1}} $$
$$= \frac{1}{2} \log \frac{\eta \sigma^2_Y + \eta \sigma_U^2 + 1}{1 + \sigma_U^2 \eta}$$
    
\end{proof}

\section{Relating molecular measurement sensitivity to signal-to-noise ratio}\label{app:umi-to-snr}

In our analysis of transcriptomic data, we relate the SNR to the mean UMI per cell as $\eta \propto \textrm{UMI}$. Here, we provide a statistical motivation for this relationship.

The definition of SNR that we introduce in Eq.~(1) is the power ratio, (as in \cite{Guo2004-yr, Thomas-M-Cover2006-xh}):

\begin{equation}
\text{SNR} := \frac{\mathbb{E} [S^2]}{\mathbb{E} [N^2]}    
\end{equation}

where $S$ is a signal corrupted by additive noise $N$.

This definition aligns with the choice of $\text{SNR} = \eta$ in the scalar case of the noise propagation model introduced in \textbf{Box \ref{box:theory}} and \textbf{Supplemental Text \ref{app:proof}}. The observed noisy signal is defined as $Z = \sqrt{\eta} X + \mathcal{N}(0, 1)$, so the power-ratio is

\begin{equation}
    \text{SNR}_{\text{model}} = \mathbb{E} [(\sqrt\eta X)^2] = \eta \mathbb{E} [X^2]
\end{equation}

where $\eta$ is equivalent to the power-ratio notion of SNR when $X$ is normalized such that $\mathbb{E} [X^2] = 1$. This is the convention followed in \citet{Guo2004-yr}.

In the transcriptomic data, a measurement is a number of molecules detected in a cell, following $u \sim \text{Pois} (\lambda)$. We define a signal component as the mean $\lambda$ and a noise component as fluctuations $u - \lambda$, which together comprise the noisy measurement. Now, we can define a power-ratio SNR for counts as

\begin{equation}
\text{SNR}_{\text{Pois}} = \frac{\lambda^2}{\mathbb{E}[(u - \lambda)^2]} = \text{CV}^{-2} \propto \lambda    
\end{equation}

As such, we take $\eta \propto \lambda$ where $\lambda$ is the mean UMI count per cell.

\section{Inverting the noise scaling form}\label{sec:app_UMI90}

Measurement noise scaling laws can be used to determine the data quality or sample quantity necessary to obtain a model with a specified amount of information. The fit parameter $\bar{\eta}$ relates to the measurement sensitivity necessary to reach at least $\frac{1}{2}$ bits below $\mathcal{I}_{\max}$. More generally, one can invert Eq. \ref{eqn:scaling_general} to obtain a function, $\eta(\mathcal{I})$, which estimates the acceptable SNR necessary to learn a representation with a given information content with respect to a specified external signal. For the case of transcriptomic data, where $\text{UMI}\propto \eta$,

\begin{equation}\label{eqn:umi90}
   \text{UMI}(\mathcal{I}) = \bar{\eta}\frac{2^{2\mathcal{I}} - 1}{ 2^{2 \mathcal{I}_{\max}} - 2^{2\mathcal{I}}}
\end{equation}

Using Eqn.~\ref{eqn:umi90}, we compute the UMI90 required to saturate auxiliary-MI across all the model-task pairs studied here, and provide these in \textbf{Supplemental Table~\ref{tab:u90}}.

\begin{table}[h]
\centering
\begin{tabular}{lrrrrr}
\hline
Metric & Geneformer & PCA & Random Projection & SCVI & Actual UMIs \\
\hline
Temporal MI & $2571.7 \pm 201.1$ & $5606.4 \pm 994.0$ & $8012.5 \pm 1028.9$ & $2815.5 \pm 198.8$ & 2500 \\
Clonal MI & $4212.9 \pm 854.2$ & $20190.0 \pm 3764.7$ & $7339.6 \pm 1673.0$ & $5073.3 \pm 457.0$ & 2580 \\
Spatial MI & $2212.0 \pm 537.3$ & $32107.5 \pm 31922.5$ & $1339.9 \pm 177.0$ & $5000.4 \pm 830.7$ & 367 \\
Protein MI & $8090.1 \pm 1510.3$ & $46266.6 \pm 9330.5$ & $46989.6 \pm 5007.8$ & $44958.5 \pm 6071.7$ & 8100 \\

\hline
\end{tabular}
\caption{$\eta_{90}$ values by metric and model family, with $\pm2\sigma$.}
\label{tab:u90}
\end{table}

\clearpage

\end{appendices}

\bibliography{sn-bibliography}
\end{document}